\newcommand{\comment}[1]{}
\newtheorem{theorem}{Theorem}{\bfseries}{\itshape}
\newtheorem{lemma}{Lemma}{\bfseries}{\itshape}
\newtheorem{assumption}{Assumption}{\bfseries}{\rm}
\theoremstyle{definition}
\theoremstyle{plain}
\definecolor{linkblue}{named}{Blue}
\title{Gathering by Repulsion}
\author{Prosenjit Bose\footnote{School of Computer Science, Carleton University, Canada. email: {jit@scs.carleton.ca}. Research supported in part by NSERC.} \and \and Thomas C. Shermer\footnote{School of Computing Science, Simon Fraser University, Canada. email:{shermer@sfu.ca}}}
\begin{document}


\newtheorem{thm}{Theorem}
\newtheorem{lem}[thm]{Lemma}

\newcommand{\notes}[1]{\marginpar{\tiny{#1}}}

\newlength\problemsep
\setlength\problemsep{10pt}
\newcommand{\given}{}
\newcommand{\find}{}
\newcommand{\nogo}{\textsc{impossible}}
\newcommand{\go}{\textsc{possible}}
\newcommand{\bdP}{\ensuremath{\partial P}}
\newcommand{\bdPccw}[1]{\ensuremath{\partial^{+}(#1)}}
\newcommand{\bdPcw}[1]{\ensuremath{\partial^{-}(#1)}}
\newcommand{\ERCW}[1]{\ensuremath{R_{\textsl{cw}}( #1 )}}
\newcommand{\ERCCW}[1]{\ensuremath{R_{\textsl{ccw}}( #1 )}}
\newcommand{\ERSPLIT}[1]{\ensuremath{S( #1 )}}
\newcommand{\vangle}[1]{\ensuremath{u_{#1}}}
\newcommand{\vanglecw}[1]{\ensuremath{u_{#1}^{\textsl{\tiny cw}}}}
\newcommand{\vangleccw}[1]{\ensuremath{u_{#1}^{\textsl{\tiny ccw}}}}
\newcommand{\accumccw}[1]{\ensuremath{\textsl{accum}^{\textsl{\tiny ccw}}(#1)}}
\newcommand{\accumcw}[1]{\ensuremath{\textsl{accum}^{\textsl{\tiny cw}}(#1)}}
\newcommand{\pifrac}[2]{\ensuremath{\frac{#1 \pi}{#2}}}
\newcommand{\piover}[1]{\pifrac{}{#1}}
\newcommand{\piovertwo}{\ensuremath{\frac{\pi}{2}}}
\newcommand{\threepiovertwo}{\ensuremath{\frac{3\pi}{2}}}
\newcommand{\tqpi}{\pifrac{3}{4}}
\newcommand{\sed}[1]{\ensuremath{D(#1)}}
\newcommand{\bd}[1]{\ensuremath{\partial D(#1)}}

\newenvironment{problem}[1] {			
	\vspace{\problemsep}
	\noindent{\scshape \underline{#1}}
	\renewcommand\descriptionlabel[1]%
		{\hspace{\labelsep}\textsc{##1}}
	\renewcommand{\given}{\item[Given:]}
	\renewcommand{\find}{\item[Find:]}
	\begin{description}
}{
	\end{description}
	\vspace{\problemsep}
}

\maketitle
\begin{abstract}
We consider a repulsion actuator located in an $n$-sided convex environment full of point particles. When the actuator is
activated, all the particles move away from the actuator.
We study the problem of gathering all the particles to a point. 
We give an $O(n^2)$ time algorithm to compute all the actuator locations that gather the particles to one point with one activation, and an $O(n)$ time algorithm to find a single such actuator location if one exists. We then provide an $O(n)$ time algorithm to place the optimal number of actuators whose sequential activation results in the gathering of the particles when such a placement exists. 

\end{abstract}

\section{Introduction}
	In this paper, we consider some basic questions about \emph{movement by
	repulsion}. Here a point actuator repels particles, or put another way, particles move so as to locally maximize their distance from the actuator.
	This problem models magnetic repulsion,
	movement of floating objects due to waves,
	robot movement (if robots are programmed to move away from certain stimuli),
	and crowd movement in an emergency or panic situation.
	It is, in one sense, the opposite of movement by attraction, which
	has recently been an active topic of research \cite{biro, beacon1, beacon2, beacon3, beacon4, beacon5, beacon6, beacon7}.
	
	\subsection{Related work}
	We initiate the study of repulsion in polygonal settings. 
	The closest comparable work is the work on attraction. Although attraction and repulsion have a similar definition, each has a distinct character. 
      Attraction as it has been studied is mainly a two-point relation:
       a point $p$ attracts a point $q$ if $q$, moving locally to minimize distance to $p$, eventually reaches $p$.
	  In repulsion, $p$ cannot repulse $q$ to itself; $p$ must always repulse $q$ to some other point $r$. 
      Thus repulsion is a three-point relation.

In attraction, if a particle is attracted onto an edge by a beacon, it is pulled towards the point $p$ where
     there is a perpendicular from the beacon to the line through the edge.   If $p$ is on the edge, this creates a stable minimum at $p$, and particles accumulate at such mimima.  As well, particles can accumulate on some convex vertices.
     
     In repulsion, if a particle is repelled onto an edge by a repulsion actuator, it is pushed away from the point $p$ with the perpendicular to the actuator.  This implies that $p$ is an \emph{unstable} maximum.  We forbid particles from stopping at unstable maxima, so in repulsion the only accumulation points will be convex vertices. We elaborate further on our model in Subsection \ref{sec:model}.
	
	In this article, we highlight some of the similarities as well as distinctions between these two concepts. 
	For instance, Biro \cite{biro} designed an $O(n^2)$ time algorithm for computing the \emph{attraction kernel} of a simple $n$-vertex polygon $P$; 
     these are all points $p \in P$ that attract all points $q \in P$. The closest counterpart of this for repulsion, 
     which we call the \emph{repulsion kernel} of a polygon $P$, is all points $p \in P$ such that there exists a point
     $r \in P$ such that $p$ repels all points in $P$ to $r$.  We give an $O(n^2)$ time algorithm to compute the repulsion kernel
     of an $n$-vertex convex polygon, and an $O(n)$ time algorithm to find a single-point in the repulsion kernel or report that the kernel is empty.
    
    Both the attraction kernel and the repulsion kernel are concerned with the problem of gathering particles to a point. When the repulsion kernel is empty, it may be the case that we can still gather all particles to a point  using more than one repulsion actuator. In this vein, we prove that this is impossible in a polygon with three acute angles. In a convex polygon with at most two acute angles, two repulsion actuators are always sufficient and sometimes necessary. We then provide an $O(n)$ time algorithm to place the optimal number of actuators.
   
	\subsection{The model}\label{sec:model}
	
	We start with an $n$-vertex convex polygon $P$, which includes its interior.
	Before the activation of any repulsion actuator, there is a particle on every point of the polygon, including the boundary.
	During and after activation, we allow many particles to be on the same point; once two particles reach the same
	point, they travel identically, so we consider them to be one particle.
	
	We restrict the location of the repulsion actuator to points in $P$;
	allowing the actuator to reside outside $P$ leads to a variation of the problem
	in which convex polygons are easily dispensed. 
	
	See Figure \ref{fig:Definitions} for an illustration of the following definitions.
	The activation of an actuator will
	cause all particles to move to locally maximize their distance from the actuator.
	This means that if a particle is in the interior of $P$, then it moves in a
	straight line away from the actuator's location.
	If a particle is on an edge of the polygon,
	then it proceeds along the edge in the direction that will further its
	distance from the active actuator.
	Once moving, a particle moves until it is stable and can no longer locally
	increase its distance from the actuator.
	Stable maxima happen at vertices where neither of the two edges allows movement
	away from the actuator.
	We call such vertices the \emph{accumulation points} of the activation.

	\begin{figure}
		\centering 
		\includegraphics[scale=0.8]{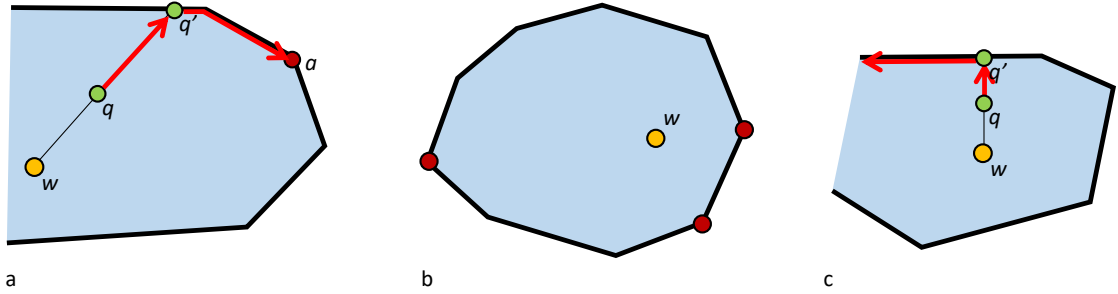}
		\caption{  
			(a) An activation at $w$ drives the particle at $q$ away from $w$.
			On reaching an edge at $q'$,
			it will continue to move away from $w$,
			until it reaches a local maximum of distance from $w$ at $a$.
			(b) Accumulation points of an active actuator at $w$.
			(c) At an unstable maximum, such as $q'$, particles will turn left.}
		\label{fig:Definitions} 
	\end{figure}

	Unstable maxima happen when a particle is on an edge
	where one or both directions give no differential change of distance from the actuator;
	this happens only at the perpendicular projection of the actuator onto the edge
	(see Figure \ref{fig:Definitions}c).
	A particle at an unstable maximum will move off of it in a direction of no improvement and then will
	be able to increase the distance from the actuator by
	continuing in that direction.
	To maintain a deterministic model, we will assume that particles
	move counterclockwise around the polygon at unstable maxima if there is a choice of two directions of no improvement.  However, the choice of counterclockwise motion is
	arbitrary, and does not affect our results.
	
	We may activate actuators sequentially from several places inside the polygon. We would like for every activation of an actuator to be from a location without
	particles, but the particle-on-every-point model forbids this on the first activation.
	So, when we choose a location for the first actuator, we remove the
	particle at that location from the problem.  For subsequent activations, however, we
	do require that the actuator's position be chosen from the points of
	the polygon without particles.
	
The main question we consider is when can we place a sequence of points such that repulsion from those points gathers all other points in the polygon to one point? When the repulsion kernel is non-empty, one point is sufficient.
 In general, our goal is to minimize the number of sequential activations performed to gather all the particles to one point.
	If all the particles in a polygon can be gathered to a point with $k$ sequential activations of actuators, we call the polygon \emph{$k$-gatherable}.  If this is not possible for any $k$, then
	we call the polygon \emph{ungatherable}.

\section{Background, notation, and terminology}

	\subsection{General notation}
		We will use the convention that the vertices of $P$ are $v_0, v_1, \ldots ,
		v_{n-1}$ in counterclockwise order around the polygon.
		Vertex indices are taken modulo $n$, so 
		$ v_{-1} = v_{n-1},
		v_0 = v_n,
		v_1 = v_{n+1}$, etc.
		Edges are denoted $e_0, e_1, \ldots e_{n-1},$ with $e_i$ being the edge
		between $v_i$ and $v_{i+1}$.
		The boundary of the polygon $P$ will be denoted \bdP, and by
		$\bdP(p, q)$ we mean the part of $\bdP$ from $p$ counterclockwise to
		$q$.
		In reference to curves, line segments, or intervals, we use the usual
		parentheses to denote relatively open ends and square brackets
		to denote relatively closed ends.
		Thus $\bdP[p, q)$ is the boundary from $p$ to $q$, including $p$ but not $q$.
		Given three distinct points $a, b, c$ in the plane, by $\angle{abc}$ we mean the counterclockwise angle between 
		the ray from $b$ to $a$ and the ray from $b$ to $c$.

	\subsection{Slabs and the three regions of an edge}
  		Consider a polygon edge with particles covering it. When an actuator is activated, 
  		depending on its location relative to the edge, 
  		there are three possible effects on the particles:  it drives them
  		counterclockwise over the entire edge, it drives them clockwise over
  		the entire edge, or it drives some of them clockwise and some of them
  		counterclockwise (see Figure \ref{fig:WoofEffects}).  In the latter case,
  		a perpendicular from the edge to the actuator exists, and the particles clockwise
  		of the perpendicular are driven clockwise, and the particles counterclockwise
  		of the perpendicular are driven counterclockwise.
  		The point where the perpendicular hits the edge is called a \emph{split
  		point}.  We allow split points at the endpoint of an edge if a perpendicular
  		from the endpoint to the actuator exists.
  		
  		\begin{figure}
			\centering 
			\includegraphics[scale=0.8]{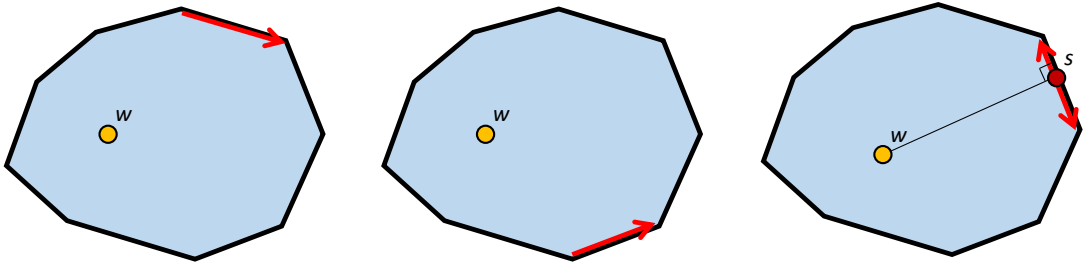}
			\caption{We use arrows in the diagrams to show the direction that the
					particles are driven.  
				(a) The activation drives the particles (on the indicated edge) clockwise.
				(b) The activation drives the particles counterclockwise.
				(c) The activation splits the particles at $s$, driving some clockwise and some
				counterclockwise.}
			\label{fig:WoofEffects} 
		\end{figure}
		
		We divide the inner halfplane of an edge $e$ into three regions
		depending on what effect an activation in the region has on the particles on the edge.
		This is done by drawing interior-facing perpendiculars to the edge at each of
		its vertices.
		The regions are \ERCW{e}, where an activation drives the particles clockwise,
		\ERCCW{e}, where an activation drives the particles counterclockwise,
		and \ERSPLIT{e}, where an activation drives some particles clockwise and some
		counterclockwise.  We refer
		to \ERSPLIT{e} as the \emph{slab} of $e$.  The slab is closed on its
		boundaries, and \ERCW{e} and \ERCCW{e} are open where they meet \ERSPLIT{e}.
	
  		\begin{figure}
	 		\centering 
	 		\includegraphics[scale=0.8]{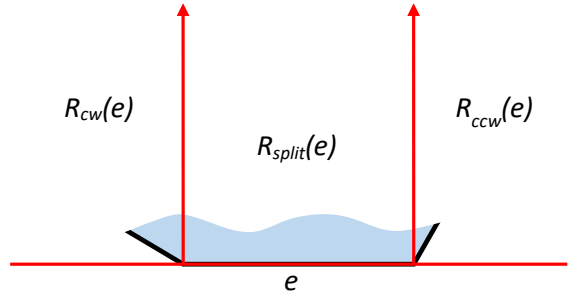}
			\caption{The three regions \ERCW{e}, \ERCCW{e}, and \ERSPLIT{e}.}
			\label{fig:EdgeRegions} 
		\end{figure}

	\subsection{Flow diagrams}

		Given a polygon $P$ and a location $w$ of an actuator, we may find the accumulation
		points and the split points, and mark each edge (or portion of a split edge)
		with the direction of particle movement along that edge, as in Figure
		\ref{fig:AccumSplit}.  We call a diagram of this a \emph{flow diagram for $w$ with respect to $P$}.
		
		\begin{figure}
	 		\centering 
	 		\includegraphics[scale=0.8]{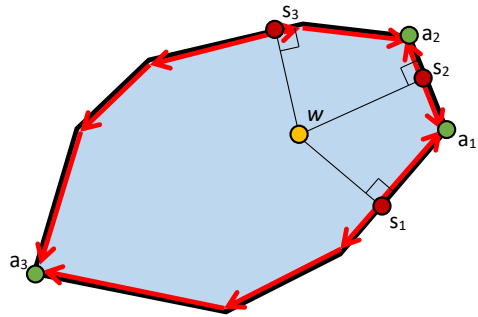}
			\caption{A flow diagram, showing the direction of particle movement, along with
			the accumulation points and split points, given an actuator at $w$.}
			\label{fig:AccumSplit} 
		\end{figure}
		
		\begin{lemma}\label{lem:alternating}
		    In a traversal of \bdP, accumulation and split points alternate.
		\end{lemma}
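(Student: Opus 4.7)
My approach is to regard the distance function $r : \bdP \to \R$, $r(x) = |xw|$, as a continuous piecewise-smooth function on the closed curve \bdP, to identify accumulation points with its local maxima and split points with its local minima, and then to invoke the fact that local maxima and minima of a non-constant continuous function on a closed curve alternate.

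By the model, $a \in \bdP$ is an accumulation point iff both incident edges at $a$ drive particles toward $a$, which is exactly the condition that $a$ be a local maximum of $r$. Restricted to the interior of an edge, $r$ is strictly convex, so it has no interior local maximum, while its unique interior local minimum (if any) is the foot of the perpendicular from $w$, i.e., a split point. At a vertex $v$, $r$ has a corner: $v$ is a local maximum of $r$ exactly when $v$ is an accumulation point, and below I verify that if $v$ is a local minimum of $r$, then $v$ must already be a split point.

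The main obstacle is this last verification, for otherwise a vertex could be a local minimum of $r$ without being either an accumulation or a split point. If $r$ has a local minimum at $v$, both incident edges take $r$-values larger than $r(v)$ near $v$, which forces the foot of the perpendicular from $w$ on each incident edge to lie at $v$ or strictly beyond $v$ on the edge extension. If at least one foot coincides with $v$, then $v$ is by definition a split point. The remaining possibility --- that \emph{both} feet lie strictly beyond $v$ --- would require $\angle v_i v w > \pi/2$ and $\angle v_{i+2} v w > \pi/2$, so their sum exceeds $\pi$. But since $w \in P$ and $P$ is convex, the vector $w - v$ lies in the interior cone at $v$, and these two angles sum exactly to the interior angle $\alpha$ at $v$. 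Since $\alpha < \pi$ by convexity of $P$, this possibility is ruled out.

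The lemma now follows: $r$ is a non-constant piecewise-smooth function on the closed curve \bdP\ with only finitely many critical points, so its local maxima and minima alternate along \bdP; by the identifications above, so do accumulation and split points.
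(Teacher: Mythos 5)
Your proof is correct and reaches the lemma by a genuinely different route. The paper argues combinatorially on the flow diagram: the only boundary points where the direction of particle motion reverses are accumulation points (where the flow converges) and split points (where it diverges), and convergent and divergent reversals must alternate around a closed curve. You instead work with the distance function $r(x)=|xw|$ on \bdP\ and reduce the statement to the alternation of local maxima and minima of a continuous function with finitely many extrema on a circle. The two arguments encode the same underlying fact --- the flow direction is the sign of the derivative of $r$ along the boundary --- but yours makes explicit, and actually proves, the step the paper's three-sentence proof takes for granted: that a reversal from counterclockwise to clockwise flow (a local minimum of $r$) can occur only at a split point. Your vertex verification --- that the feet of the perpendiculars from $w$ onto the two incident edge-lines cannot both lie strictly beyond $v$, because the two relevant angles at $v$ sum to the interior angle, which is less than $\pi$ --- is precisely where convexity of $P$ and the hypothesis $w\in P$ enter, and neither is visibly invoked in the paper's proof. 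One definitional wrinkle affects your identification and the paper's argument equally: the paper also declares an edge endpoint to be a split point when the perpendicular foot lands exactly there, and at such a degenerate vertex the flow can simply pass through ($r$ increases along one incident edge and decreases along the other), so the point is neither a local minimum of $r$ nor a divergence of the flow. Both proofs implicitly read ``split point'' as ``divergence point,'' which is the reading under which the lemma holds; with that reading your argument is complete.
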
		
		
		\begin{proof}		
		    Note that in a flow diagram the only points of the boundary with two
		    opposing directions of particle movement are the accumulation points, where
		    the movement is towards the point, and the split points, where the
		    movement is away from the point.  Thus, between any two consecutive split
		    points on the boundary, there must be an accumulation point, and between
		    any two consecutive accumulation points, there must be a split point. 
		    This implies the lemma.
		\end{proof}

		\begin{theorem}\label{thm:oneSlab}
		    A convex polygon $P$ is $1$-gatherable from $w$ iff $w$ lies in the slab of exactly
		    one of the edges of $P$.
		\end{theorem}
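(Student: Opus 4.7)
The plan is to chain two equivalences via \lemref{alternating}. Observe first that $P$ is $1$-gatherable from $w$ precisely when the flow diagram of $w$ has exactly one accumulation point, since particles come to rest only at accumulation points. By \lemref{alternating}, the number of accumulation points equals the number of split points, so $1$-gatherability from $w$ is equivalent to the flow diagram having exactly one split point.

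Second, I would argue that the number of split points equals the number of slabs containing $w$. If $w \in S(e)$, then the perpendicular from $w$ to the line through $e$ meets $e$, producing a split point on $e$ by definition. Conversely, every split point lies on some edge $e$ and is obtained as such a perpendicular projection, so $w \in S(e)$. Distinct slabs yield distinct split points: two split points on different edges could coincide only at a shared vertex $v$, which would force $w$ to lie on the perpendiculars to both incident edges at $v$---two different lines meeting only at $v$---so $w = v$. Chaining these equivalences gives the claim: $P$ is $1$-gatherable from $w$ iff $w$ lies in the slab of exactly one edge.

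The main obstacle I anticipate is the degenerate case in which $w$ lies on the boundary of a slab $S(e)$: the perpendicular projection then falls at a vertex and the edge $e$ exhibits uniform flow rather than an interior split. I will need to verify that the incident vertex still registers as a functional split point in the flow diagram---so that the bijection between slabs containing $w$ and split points is preserved---either by a small generic-position perturbation argument or by an explicit case analysis of the flow direction on the two edges incident to that vertex, checking that an opening of the slab-boundary case is compensated by a vertex split in the flow diagram.
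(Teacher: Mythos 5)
Your proof is correct and follows essentially the same route as the paper: reduce $1$-gatherability to having a single accumulation point, invoke the alternation of accumulation and split points to convert this to having a single split point, and then identify split points with slabs containing $w$. The paper's proof is terser (it does not spell out the degenerate slab-boundary case, which the closedness of slabs and the convention allowing split points at edge endpoints already handle), but the argument is the same.
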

		
		\begin{proof}
			A polygon is $1$-gatherable from $w$ iff an actuator at $w$ has one accumulation point.
			Since accumulation and split points alternate, this holds iff the actuator
			has exactly one split point.  Since an actuator has a single split point in every
			slab that it is in (and no others), the result follows.
		\end{proof}
		
		The boundary of the slab for edge $e$ consists of $e$ and two rays perpendicular to $e$. If we produce these two rays for each edge of $P$, and intersect all these rays with $P$, we get a set of at most $2n$ chords that define a decomposition that we call the {\em slab decomposition} of $P$. An example slab decomposition is shown in Figure \ref{fig:BoundaryWoof}. The cells of this decomposition have the property that if two points are in a cell, then these two points are in exactly the same set of slabs of $P$.
		
		Theorem \ref{thm:oneSlab} then immediately implies that the repulsion kernel of $P$ is the union of zero or more cells of the slab decomposition of $P$. This gives us the basis for an $O(n^2)$ time algorithm for finding the repulsion kernel. We start by constructing the slab decomposition. We can use topological sweep to compute a quad-edge data structure for the slab decomposition in $O(n^2)$ time \cite{sweep1, sweep2, quad}.
		
		\begin{theorem}
	The repulsion kernel of a convex polygon can be computed in $O(n^2)$ time.
		\end{theorem}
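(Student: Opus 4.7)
The plan is to build on the observation already noted in the excerpt: the slab decomposition has complexity $O(n^2)$ and can be computed in $O(n^2)$ time via topological sweep, and by \thmref{oneSlab} the repulsion kernel is exactly the union of those cells of the slab decomposition that lie in exactly one slab. Thus the whole task reduces to tagging each of the $O(n^2)$ cells with its \emph{slab count} (the number of edges $e$ of $P$ whose slab $\ERSPLIT{e}$ contains the cell), and then outputting the cells whose tag is $1$.

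To tag the cells, I would work on the planar subdivision produced by the sweep, equipped with a quad-edge structure so that adjacent cells can be enumerated in total time proportional to the size of the subdivision. The key structural fact is that every chord of the slab decomposition is a perpendicular ray emanating from exactly one vertex of $P$ relative to exactly one edge of $P$, so it forms part of the boundary of exactly one slab $\ERSPLIT{e}$. Consequently, when we cross such a chord from one cell $C$ to an adjacent cell $C'$, the slab count changes by exactly $\pm 1$, depending on whether $C'$ lies on the $e$-side or the non-$e$-side of the chord, and this can be decided in $O(1)$ per crossing.

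With this in hand, the algorithm is: pick any seed cell $C_0$ and compute its slab count directly in $O(n)$ time by testing membership in each of the $n$ slabs; then run a breadth-first (or depth-first) search on the dual graph of the slab decomposition, updating the slab count at each step by the $\pm 1$ rule as we cross a chord into a new cell. Since the slab decomposition has $O(n^2)$ cells and $O(n^2)$ adjacencies in total, this traversal costs $O(n^2)$ time. Finally, outputting the union of all cells whose slab count equals $1$ takes time linear in the output size, which is $O(n^2)$.

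The main obstacle I anticipate is exactly the one addressed above: avoiding the naive $O(n^3)$ cost of recomputing each cell's slab count from scratch. The $\pm 1$ invariant across chord crossings is what saves us, and it relies crucially on each chord being the boundary of a unique slab; once this is observed, the remaining work is essentially bookkeeping on the quad-edge structure. Combining the $O(n^2)$ construction of the slab decomposition with the $O(n^2)$ dual-graph traversal and output step yields the claimed $O(n^2)$ bound.
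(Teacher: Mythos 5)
Your proposal is correct and follows essentially the same approach as the paper: construct the slab decomposition by topological sweep, seed one cell's slab count, and propagate counts by constant-time updates during a dual-graph traversal, reporting all cells with count one. The only nuance the paper adds is that a chord may bound \emph{two} slabs when two defining perpendicular rays are collinear, so the per-crossing update is by a small constant rather than exactly $\pm 1$; this does not affect the $O(n^2)$ bound.
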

		\begin{proof}
		We construct the slab decomposition. As we construct the decomposition, we augment each edge with information about which slab or slabs it borders and to which side of the edge said slabs are on. (An edge may border two slabs if the two slabs each have a defining ray that are collinear.). Choose an arbitrary cell $c$ of the decomposition and determine how many slabs it is in. From this cell, perform a graph search on the dual of the decomposition. Each time we step over an edge, from one cell to another, during this search, we update in constant time the number slabs we are in, according to the information on the edge. We maintain a list of all cells where this value is one. At the end of the search, this list is the repulsion kernel. 
		\end{proof}

If we allow actuators to be located outside a polygon $P$, then every convex polygon is 1-gatherable.

    	\begin{lemma}\label{obs:1-gather}
 Every convex polygon is 1-gatherable from some point in the plane.
    \end{lemma}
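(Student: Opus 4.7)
The plan is to place $w$ very far from $P$ in a generic direction and exploit the fact that, from very far away, repulsion behaves essentially like translation, funneling every particle to a single extremal vertex of $P$.

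First, I would pick a unit vector $\vec{u} \in \mathbb{R}^2$ satisfying two genericity conditions: (a) the linear function $v \mapsto v \cdot \vec{u}$ attains its maximum over the (finite) vertex set of $P$ at a unique vertex $v^\star$, and (b) no edge of $P$ is perpendicular to $\vec{u}$. Each condition excludes only finitely many directions on the unit circle, so such a $\vec{u}$ exists. I then set $w := v^\star - M\vec{u}$ for a parameter $M > 0$; for $M$ sufficiently large, $w$ lies outside $P$.

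The key technical estimate is the expansion $|w - p|^2 = M^2 - 2M(v^\star - p)\cdot\vec{u} + |v^\star - p|^2$, which gives $|w - p| = M + (p - v^\star)\cdot\vec{u} + O(1/M)$ uniformly in $p \in P$. Thus, up to lower-order terms, the distance from $w$ is the linear function $p \mapsto (p - v^\star)\cdot\vec{u}$ plus a constant. By condition (b) this linear function is strictly monotone along every edge; by condition (a) and the convexity of $P$, it attains its unique maximum on $\partial P$ at $v^\star$. For $M$ large enough, these strict properties transfer to the actual distance function, yielding (i) distance from $w$ is strictly monotone on every edge, so no perpendicular foot from $w$ lands in the interior of any edge and the flow diagram has no split points; and (ii) $v^\star$ is the unique local maximum of distance on $\partial P$.

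Combining (i) and (ii), the flow has a single accumulation point at $v^\star$: on each edge, particles move toward the endpoint with the larger value of $v\cdot\vec{u}$, and by convexity these directions funnel monotonically along both arcs of $\partial P$ into $v^\star$; interior particles travel radially away from $w$, strike $\partial P$, and then follow the boundary flow to $v^\star$. Hence every particle gathers at $v^\star$, so $P$ is $1$-gatherable from $w$. The one step demanding care is the asymptotic expansion of $|w - p|$; everything else is an immediate consequence of the genericity of $\vec{u}$ and the convexity of $P$.
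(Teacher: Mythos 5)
Your proposal is correct and is essentially the paper's argument: the paper simply observes that a point sufficiently far from $P$ in a suitable direction lies in no slab \ERSPLIT{e}, hence an actuator there has no split points and therefore (by Lemma~\ref{lem:alternating}) exactly one accumulation point. The one step of yours that needs a sharper justification is the claim that the ``strict properties transfer'' to the true distance function: uniform $O(1/M)$ closeness of $|w-p|$ to a strictly monotone linear function does not by itself imply that $|w-p|$ is strictly monotone along an edge. The clean reason is that, since no edge is perpendicular to $\vec{u}$, the foot of the perpendicular from $w=v^\star-M\vec{u}$ onto the line supporting each edge runs off to infinity as $M\to\infty$ and so lies outside every edge for $M$ large; since the distance from $w$ restricted to a line is convex with its unique minimum at that foot, it is then strictly monotone on each edge. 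This is precisely the paper's condition that $w$ lie outside all slabs, so with that repair your argument and the paper's coincide.
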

    \begin{proof}
    If you go far enough away, you can always find a point that is not covered by any slab. For this point, there is only one accumulation point. Therefore, an activation of an actuator from this point moves all the particles to the accumulation point. 
    \end{proof}
   		
	Given the above, one may be tempted to believe that \emph{every} convex polygon is $1$-gatherable when the actuators are restricted to be inside the polygon.  However, this is not always the case.
	
	\begin{lemma}\label{lem:needTwoWoofs}
		For $k \geq 2$, the regular $(2k+1)$-gon $P_{2k+1}$ is not $1$-gatherable.
	\end{lemma}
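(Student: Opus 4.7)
The plan is to apply \thmref{oneSlab}: it suffices to show that no point $w \in P_{2k+1}$ lies in the slab of exactly one edge. I aim to prove the stronger statement that every $w \in P_{2k+1}$ lies in the slab of at least two edges. Inscribe $P_{2k+1}$ in the unit circle centered at the origin, placing vertex $v_j$ at angle $\theta_j = \pi/2 + 2\pi j/(2k+1)$, and write an arbitrary $w \in P_{2k+1}$ in polar form as $w = r(\cos\psi, \sin\psi)$; then $r \leq 1$, with equality only at the vertices.

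First, I will derive an explicit algebraic criterion for $w \in \ERSPLIT{e_j}$. Writing $\mu_j = \theta_j + \pi/(2k+1)$ for the angular position of the midpoint of $e_j$, a direct expansion using the coordinates of $v_j$ and $v_{j+1}$ shows that the signed displacement of the foot of the perpendicular from $w$ to the line through $e_j$, measured from the midpoint of $e_j$ along the edge, equals $r\sin(\psi - \mu_j)$. Since $|e_j| = 2\sin(\pi/(2k+1))$, the slab condition $w \in \ERSPLIT{e_j}$ becomes
\[
\lvert \sin(\psi - \mu_j) \rvert \leq \frac{\sin(\pi/(2k+1))}{r}.
\]

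Next I will apply a pigeonhole argument on the circle. The locus of $\phi \in [0, 2\pi)$ with $\lvert\sin\phi\rvert \leq c$ is the union of two disjoint closed arcs, centered at $0$ and $\pi$, each of arclength $2\arcsin c$. With $c = \sin(\pi/(2k+1))/r$ and $r \leq 1$, each arc has length at least $2\pi/(2k+1)$. On the other hand, the $(2k+1)$ values $\phi_j := \psi - \mu_j$ for $j = 0, 1, \ldots, 2k$ are equally spaced on the circle at spacing $2\pi/(2k+1)$, and any closed arc whose length is at least this spacing contains at least one of them. Since the two arcs are disjoint, each contributes at least one index $j$ with $w \in \ERSPLIT{e_j}$, giving at least two slabs containing $w$. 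By \thmref{oneSlab}, $P_{2k+1}$ is not $1$-gatherable from $w$; since $w$ was arbitrary, $P_{2k+1}$ is not $1$-gatherable at all.

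The main obstacle I expect is the trigonometric identity establishing that the signed displacement of the foot of $w$ from the midpoint of $e_j$ equals $r\sin(\psi - \mu_j)$; this is a short but careful product-to-sum computation. A secondary delicate point is the boundary case $r = 1$, where $w$ is a vertex and the two arcs have length exactly equal to the grid spacing, so I must justify that each arc still captures at least one grid point. The hypothesis $k \geq 2$ does not enter this counting argument directly, but it is the interesting range: for $k = 1$ every interior angle is strictly less than $\pi/2$, which by a separate and easier observation forces every vertex to be an accumulation point of any actuator, so the lemma is trivial there.
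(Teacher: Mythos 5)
Your proof is correct, and it takes a genuinely different route from the paper's. The paper first reduces to actuator locations on the boundary (via a forward reference to Lemma~\ref{lem:boundaryWoof}), then uses the $(2k+1)$-fold symmetry to examine a single edge $e_{k+1}$ and checks explicitly that $S(e_0)$ covers its upper half and $S(e_1)$ its lower half, so every boundary point lies in at least two slabs; vertices are treated as a separate special case. You instead prove the stronger statement that \emph{every} point of $P_{2k+1}$ lies in at least two slabs, via the clean criterion $|\sin(\psi-\mu_j)|\le \sin(\pi/(2k+1))/r$ for membership in $\ERSPLIT{e_j}$ (which I have checked: the tangential component of $w$ relative to the edge midpoint is indeed $r\sin(\psi-\mu_j)$, and the slab being closed matches your non-strict inequality) followed by a pigeonhole count of the equally spaced angles $\phi_j$ against two closed arcs each of length at least the grid spacing. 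Your approach buys independence from Lemma~\ref{lem:boundaryWoof} (removing the forward dependency), handles vertices uniformly rather than as a special case, and in fact shows the repulsion kernel of $P_{2k+1}$ is empty, not merely that no boundary point works; it also works verbatim for $k=1$, as you note. Two small points to tidy in a full write-up: when $r\le\sin(\pi/(2k+1))$ (so $c\ge 1$, including $r=0$ where the division is undefined) the locus $\{|\sin\phi|\le c\}$ is the whole circle rather than two disjoint arcs, but then $w$ lies in all $2k+1$ slabs and the conclusion is immediate; and you should record the one-line argument that a closed arc of length exactly equal to the spacing still contains a grid point, which you correctly flag as the delicate $r=1$ case.
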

	
	\begin{proof}
		Assume that the edge length of $P_{2k+1}$ is $2$, and that $e_0$ is oriented
		with direction $0$ (horizontal on the bottom of the polygon).  This is
		illustrated in Figure \ref{fig:Pentagon}\ for $P_5$.
		
		By Lemma \ref{lem:boundaryWoof}, we need only show that $P_{2k+1}$ is not 
		$1$-gatherable from its boundary.  By symmetry, we need consider only $e_{k+1}$.
		The edge $e_{k+1}$ starts at the top center of the polygon and proceeds
		downward to the left.  The slab $S(e_0)$ contains the upper half of $e_{k+1}$,
		as the distance $c$ (see figure) is greater than $1$.  (It is $1/\sin \alpha$,
		to be precise, where $\alpha$ is half the vertex angle, or $\frac{(2k-1)\pi}{4k+2}$.)
		Similarly, the slab $S(e_1)$ contains the bottom half of $e_{k+1}$.
		
		Thus, each point of $e_{k+1}$ is in $S(e_{k+1})$ and either $S(e_0)$ or
		$S(e_1)$ or both.  Thus, by Theorem \ref{thm:oneSlab}, the polygon is not
		$1$-gatherable from any point of $e_{k+1}$.
		
		The vertices are sometimes special cases, but here the vertex $v_{k+1}$ (the
		top vertex of the polygon) is in $S(e_0)$, $S(e_k)$, and $S(e_{k+1})$, and
		thus the polygon is not $1$-gatherable from there.  By symmetry, it is not
		$1$-gatherable from any vertex.

		\begin{figure}
 			\centering 
 			\includegraphics[scale=0.6]{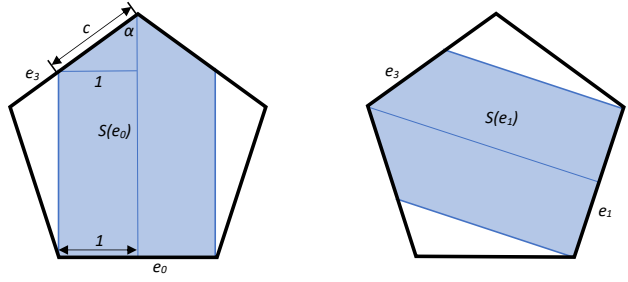}
			\caption{(a) $S(e_0)$ covers the top half of $e_3$.
			         (b) $S(e_1)$ covers the bottom half.}
			\label{fig:Pentagon} 
		\end{figure}
	\end{proof}
	
	In fact, some convex polygons may be ungatherable. It turns out that acute angles are a major impediment to gathering.
  		
    	\begin{lemma}\label{obs:acute}
     A particle that is at an acute vertex $v$ of $P$ cannot be moved by an actuator activated at any point in $P\setminus{v}$.
   	\end{lemma}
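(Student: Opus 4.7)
The plan is to show that whenever $v$ is acute and $w\in P\setminus\{v\}$, the distance-from-$w$ function, restricted to $P$ near $v$, attains a strict local maximum at $v$; the movement rule then keeps the particle at $v$.

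Write $v=v_i$ and let $\alpha=\angle v_{i-1}v_iv_{i+1}<\pi/2$ be the interior angle. Because $P$ is convex and $w\in P$, the ray from $v_i$ through $w$ lies in the closed wedge at $v_i$ bounded by the rays to $v_{i-1}$ and $v_{i+1}$. Consequently
\[
\angle wv_iv_{i-1}+\angle wv_iv_{i+1}\;=\;\alpha\;<\;\pi/2,
\]
so each of the two summands is strictly less than $\pi/2$.

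Next I would translate the movement rule into these angle conditions. For the particle at $v_i$ to escape along $e_i$ toward $v_{i+1}$, the distance from $w$ must strictly increase along $e_i$ at $v_i$; a one-line derivative (equivalently, the foot of the perpendicular from $w$ onto the line through $e_i$ must lie on the far side of $v_i$ from $v_{i+1}$) shows that this is equivalent to $\angle wv_iv_{i+1}>\pi/2$. The symmetric statement holds for $e_{i-1}$. By the identity above, neither inequality can be satisfied, so neither incident edge offers an escape direction, and by the model a particle on the boundary is confined to the boundary.

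The main obstacle is simply the bookkeeping around degenerate positions of $w$: $w$ may lie on one of the two edges incident to $v_i$, or coincide with $v_{i-1}$ or $v_{i+1}$. In every such case $w$ still lies in the closed wedge at $v_i$, the angle identity still holds, and the strict inequality $\alpha<\pi/2$ closes the argument uniformly, with no separate case analysis required. Hence the particle at an acute $v$ is immovable from any activation in $P\setminus\{v\}$.
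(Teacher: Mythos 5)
Your proof is correct and rests on the same core observation as the paper's: since the interior angle at $v$ is acute and the actuator lies in $P$, hence in the wedge at $v$, every admissible direction of motion from $v$ makes an angle less than $\pi/2$ with the direction toward the actuator, so distance strictly decreases and $v$ is a strict local maximum. The paper states this in one sentence for all nearby points of $P$, while you spell it out for the two incident edges via the angle identity; the substance is the same.
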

   	\begin{proof}
   	Given any point $p\in P\setminus{v}$, the acute vertex $v$ is a local maximum with respect to distance since any point in $P$ that is infinitesimially close to $v$ is closer to $p$ than $v$. 
   	\end{proof}
   		
   	This immediately implies the following.
   	
      \begin{theorem}
      A convex polygon with three acute vertices is not $k$-gatherable for any $k>0$.
      \end{theorem}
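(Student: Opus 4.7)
The plan is to combine Lemma \ref{obs:acute} with the model's rule that every activation after the first must be placed at a particle-free point. The idea is that among the three acute vertices, at most one can ever be ``cleared'' of its particle, so at least two acute vertices permanently retain particles and the configuration can never be gathered.

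First I would fix three acute vertices $v_a, v_b, v_c$ of $P$ and note that initially each carries a particle. I would then examine the first activation, placed at some location $w_1 \in P$. By the model, the particle at $w_1$ (if any) is the only one removed as a precondition of the activation, so at most one of the three acute vertices can have its particle removed in this way. By Lemma \ref{obs:acute}, none of the other two acute-vertex particles can be moved by the activation itself, since those vertices are distinct from $w_1$. Hence, immediately after the first activation, at least two distinct acute vertices, say $v_b$ and $v_c$, still host particles.

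Next I would proceed by induction on the activation number $i \geq 2$. The inductive hypothesis is that $v_b$ and $v_c$ both still host particles just before the $i$-th activation. The $i$-th actuator must then be placed at some point $w_i$ belonging to the current particle-free set, so $w_i \neq v_b$ and $w_i \neq v_c$. Applying Lemma \ref{obs:acute} at each of $v_b$ and $v_c$ (with $p = w_i$), neither particle moves during the $i$-th activation, and they remain at $v_b$ and $v_c$ afterward. This closes the induction.

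Therefore, for every $k \geq 1$ and every legal sequence of $k$ activations, particles remain at two distinct points $v_b$ and $v_c$, so the particles cannot have been gathered to a single point. There is no hard step here; the only thing to be careful about is making explicit that the model forbids placing later actuators on occupied vertices, which is precisely what prevents us from ``clearing'' a second acute vertex after the first activation.
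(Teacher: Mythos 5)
Your proof is correct and is exactly the argument the paper intends: the paper states the theorem as an immediate consequence of Lemma~\ref{obs:acute}, and your write-up simply makes explicit the counting (at most one acute vertex can be cleared by placing the first actuator there, and the other two retain immovable particles forever since later actuators must be placed at particle-free points). No issues.
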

      
    For the remainder of the paper, we only consider convex polygons with at most two acute vertices.
      
\section{1-Gatherability}

We have shown so far that not all convex polygons are 1-gatherable. We have also given a complete characterization of when a convex polygon is 1-gatherable by computing the repulsion kernel of a polygon in $O(n^2)$ time. This begs the question whether it is possible to find a point from which the polygon is 1-gatherable more efficiently, without having to compute the repulsion kernel. We answer this question in the affirmative by providing an $O(n)$ time algorithm. Before presenting the algorithm, we highlight some useful geometric properties.

		\begin{lemma}\label{lem:perpendicularSupport}
			Let $a$ be an accumulation point of an actuator activated at $w$ in $P$.  The line $L$ that goes
			through $a$ and is perpendicular to $wa$ is a line of support of the polygon.
		\end{lemma}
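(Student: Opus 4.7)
The plan is to show directly that both edges of $P$ incident to $a$ emanate into the closed halfplane $H$ bounded by $L$ that contains $w$, and then use convexity of $P$ to conclude $P \subseteq H$, so $L$ supports $P$.

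First I would unpack what being an accumulation point means analytically. Since $a$ is an accumulation point of the activation at $w$, it is a vertex of $P$ such that, along either incident edge, moving away from $a$ does not locally increase distance to $w$. For a unit vector $u$ pointing from $a$ along an incident edge, one computes
\[
\frac{d}{dt}\bigl|a + tu - w\bigr|\Big|_{t=0^+} = \frac{(a-w)\cdot u}{|a-w|},
\]
so the accumulation condition forces $(a-w)\cdot u \le 0$ for both edge directions at $a$. Equivalently, each such $u$ makes an angle of at most $\pi/2$ with the vector $w-a$, which is exactly to say that both edges of $P$ at $a$ emanate into the closed halfplane $H$ bounded by $L$ that contains $w$.

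Next I would invoke convexity of $P$. Because $P$ is convex, the interior angle at $a$ is at most $\pi$, so $P$ lies within the closed wedge at $a$ of angle at most $\pi$ bounded by the two rays along the edges incident to $a$. Both of these bounding rays lie in $H$, and any wedge of angle at most $\pi$ whose two bounding rays lie in a single closed halfplane must itself lie in that halfplane. Hence $P \subseteq H$, which is exactly the statement that $L$ is a line of support of $P$ (through $a$).

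The only subtlety, and the step I expect to need the most care on, is the final wedge argument: it uses both the convexity of $P$ at $a$ (so that the relevant wedge between the two edges is the non-reflex one of angle $\le \pi$) and the fact that both bounding rays lie in the same closed halfplane $H$. The possible boundary case $(a-w)\cdot u = 0$, meaning an edge direction lies along $L$, still has that direction inside $H$ and does not affect the conclusion; it would correspond to an unstable maximum, which an accumulation point is not, but we do not need to use strictness here.
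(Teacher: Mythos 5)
Your proof is correct and follows essentially the same route as the paper: establish that $L$ locally supports $P$ at $a$ because $a$ is a local maximum of distance from $w$, then use convexity to upgrade local support to global support. The only difference is cosmetic --- you verify the local step by a first-derivative computation on the two incident edge directions (plus an explicit wedge argument), whereas the paper observes that the circle centered at $w$ through $a$ locally encloses $P$ and that $L$ is tangent to that circle at $a$.
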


		\begin{proof}
			Since $a$ is an accumulation point, it is a local maximum of distance from
			$w$.  Thus, the circle $C$ with center $w$ and radius $aw$ encloses the
			polygon in the neighborhood of $a$.  The line $L$ is tangent to (outside of) $C$ at
			$a$ and thus locally supports the polygon at $a$.  Since the polygon is
			convex, $L$ also globally supports the polygon.  
		\end{proof}

We now show that we can restrict our attention to particles starting only on the
		boundary of $P$.

		\begin{lemma} \label{lem:justboundary}
			An actuator in $P$ that 1-gathers all the particles on $\bdP$ also 1-gathers all particles in $P$.
		\end{lemma}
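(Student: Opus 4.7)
The plan is to show that each interior particle, once set in motion by the actuator at $w$, eventually arrives on $\bdP$ at some point, and that from that point on its trajectory coincides with that of a boundary particle starting at the same location. Since the hypothesis guarantees all boundary particles converge to a single accumulation point $a$, the interior particle must also end up at $a$.

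Concretely, I would fix an arbitrary particle at an interior point $q \in P \setminus \{w\}$ (the particle at $w$ having been removed upon placing the actuator). By the model, it moves in a straight line along the ray from $w$ through $q$, locally increasing its distance from $w$, until it first meets $\bdP$ at some point $q'$. Convexity of $P$ guarantees that this first hit $q'$ exists and lies on $\bdP$. Once at $q'$, the particle's subsequent motion is determined only by the rule ``locally maximize distance from $w$ while staying on $\bdP$,'' which is precisely the rule governing a particle originally placed at $q'$. Thus the trajectories of the interior particle and of the boundary particle starting at $q'$ coincide from $q'$ onward.

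The next step is to argue that this tail trajectory ends at $a$. By hypothesis, the actuator at $w$ $1$-gathers every boundary particle to the single accumulation point $a$, so the particle that started at $q'$ reaches $a$. Consequently, the particle originally at $q$ also reaches $a$. As $q$ was arbitrary, every particle in $P$ ends up at $a$, which is exactly the conclusion.

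The only delicate point is the case where the ray from $w$ through $q$ hits $\bdP$ at a vertex, or where $q'$ is an unstable maximum (the foot of a perpendicular from $w$ to an edge). These are handled by the deterministic tie-breaking convention already fixed in Section~\ref{sec:model} (counterclockwise turn at unstable maxima); since that convention is the same whether the particle arrives at $q'$ from the interior or originates there, the two trajectories still agree, and the argument is unaffected. I expect this to be the only subtlety worth mentioning; the core of the proof is the straight-line-then-boundary decomposition of each interior trajectory.
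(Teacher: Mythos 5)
Your proof is correct and follows essentially the same route as the paper: decompose the interior particle's trajectory into a straight-line segment away from the actuator until it first hits $\bdP$ at some point, then identify its subsequent motion with that of the boundary particle that started there. Your extra remark about the deterministic tie-breaking at unstable maxima is a reasonable precaution but not a departure from the paper's argument.
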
 
		\begin{proof}
			The activation of an actuator in $P$ forces a particle $p$ in the interior of $P$ to move directly
			away from the actuator until it hits the boundary at some point $b$.
			Since there was a particle $p'$ whose initial position is $b$, 
			the particle $p$ will follow the path
			of $p'$ and stop at the same place $p'$ stops.  Thus, 
			the location of $p$ will always be accounted for by the position of $p'$.
			In other words,	$p$ is redundant and can be removed from the problem.  
		\end{proof}
		
		We can take this a step further and show that particles located on the interior of edges are redundant.
		
		\begin{lemma} \label{lem:justvertices}
			An actuator in $P$ that 1-gathers all the particles on the vertices $P$ also 1-gathers all particles on $\bdP$.
		\end{lemma}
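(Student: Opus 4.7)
The plan is to show that every particle on an edge reaches some vertex of $P$ in finite time, after which its motion coincides exactly with the motion of the particle that started at that vertex. Since every vertex particle terminates at a common accumulation point $a$ by hypothesis, every edge particle must terminate at $a$ as well.

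First, I would observe that accumulation points are, by definition, vertices of $P$, so the hypothesis that an actuator at $w$ sends every vertex particle to a single point $a$ implies that $a$ is the unique accumulation point of the activation (any other vertex that happened to be an accumulation point would retain its particle, contradicting the hypothesis). Thus it suffices to show that every particle on the interior of an edge also ends at $a$.

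Next, consider an arbitrary particle $p$ on the interior of an edge $e_i$. Depending on which of the three regions of $e_i$ contains $w$, the initial motion of $p$ is determined: if $w \in \ERCW{e_i}$, then $p$ moves clockwise toward $v_i$; if $w \in \ERCCW{e_i}$, then $p$ moves counterclockwise toward $v_{i+1}$; if $w \in \ERSPLIT{e_i}$, then $p$ moves toward whichever endpoint of $e_i$ lies on its side of the split point, with the deterministic convention handling the boundary case in which $p$ coincides with the split point. In every case, $p$ travels a nonzero distance along $e_i$ and reaches one of its endpoints, call it $v_j$.

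At the moment $p$ arrives at $v_j$, its future motion is governed solely by its current position and the location of $w$ (the rules of our model are memoryless), and hence coincides with the trajectory of the particle that originally resided at $v_j$. Since that vertex particle ends at $a$ by hypothesis, $p$ ends at $a$ as well, which proves the lemma. The only subtle point in carrying this out is making explicit that a moving particle and a particle at rest at the same location evolve identically under a single activation; this is immediate from the model, since the motion rule depends only on the local geometry of the polygon near the particle and on the actuator position.
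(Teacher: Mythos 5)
Your proof is correct and takes essentially the same approach as the paper: the particle on an edge interior is driven to one of the edge's endpoints (the paper's three-region setup makes this immediate), after which it coincides with, and follows the trajectory of, the particle that started at that vertex. Your added remarks on memorylessness and on the uniqueness of the accumulation point are just explicit versions of what the paper leaves implicit.
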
 
	
		\begin{proof}
		
		The activation of an actuator in $P$ forces a particle $p$ in the interior of an edge of $P$ to move along 
			 along the edge until it reaches a vertex $v$.  There was a
			particle $p'$ that started at $v$, and we can follow the proof of Lemma
			\ref{lem:justboundary}.
		\end{proof}
	
		The above lemmas show that particle movement can be restricted to the boundary. In fact, to solve the general problem, we only need to consider the problem where particles are only on vertices. We show a relationship between self-approaching paths and the path on the boundary followed by a particle under the influence of an actuator.  Recall that a directed path $\Pi$ is self-approaching if for any three consecutive points $x,y,z$ on the path, we have the property that $|xz| \geq |yz|$ \cite{icking}. 
		
		\begin{lemma} \label{lem:selfapproach}
		If $\bdP(x,y)$ is self-approaching from $x$ to $y$ then activating an actuator at $y$ sends all the particles on $\bdP(x,y)$ to $x$ along the boundary.
         \end{lemma}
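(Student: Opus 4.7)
My plan is to fix a particle starting at some $p\in\bdP(x,y)$, parameterize the arc counterclockwise, and track its motion edge by edge under an actuator at $y$. Let $e_a,e_{a+1},\dots,e_b$ be the (possibly partial) edges the arc crosses, so $x$ lies on $e_a$ and $y$ lies on $e_b$. Applying the self-approaching hypothesis to any three points $q_1,q_2,y$ with $q_1$ preceding $q_2$ on the arc yields that $q\mapsto|qy|$ is non-increasing along $\bdP[x,y]$.

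The core step is to upgrade this to strict monotonicity on each crossed edge. Restricted to $e_i$, the squared distance to $y$ is a convex quadratic in the edge's arc-length parameter (or linear, if $y$ lies on the supporting line of $e_i$), so ``non-increasing on a sub-interval'' forces the parabola's minimum to lie at or past the counterclockwise end of that sub-interval, and hence $|qy|$ is in fact strictly decreasing on the portion of $e_i$ inside the arc. Two consequences follow: (i) this portion contains no interior split point of an actuator at $y$, so by the motion model every particle on it is driven clockwise toward $v_i$; and (ii) at each vertex $v_i$ strictly interior to the arc, strict decrease on $e_{i-1}$ gives $|v_{i-1}y|>|v_iy|$, so a particle reaching $v_i$ continues clockwise onto $e_{i-1}$ rather than stopping. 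Iterating walks the particle clockwise across successive edges and vertices until it reaches $x$ on $e_a$.

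The main obstacle is step~(ii): the ``$\geq$'' in the definition of self-approaching would, on its own, permit $|v_iy|=|v_{i+1}y|$, potentially stalling motion at a vertex. The strict convexity of the squared-distance function on a straight line is what closes this gap, since a non-increasing profile on a segment is incompatible with a strict interior minimum. A minor bookkeeping point: particles that happen to start exactly at an interior vertex of the arc are absorbed by the vertex-transition argument at their first instant of motion, and the lemma asserts only that the particles reach $x$ along the boundary (it does not claim that $x$ is necessarily an accumulation point of the actuator at $y$, which would require additional information about the behaviour of $\bdP$ past $x$ on the clockwise side).
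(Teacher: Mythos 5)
Your proof is correct and takes essentially the same route as the paper's: applying the self-approaching condition with $y$ as the third point makes $q \mapsto |qy|$ non-increasing along $\bdP(x,y)$ toward $y$, so repulsion from $y$ drives every particle clockwise back to $x$ --- the paper states this in two lines, while you add the edge-by-edge convexity argument that rules out interior split points, plateaus, and stalling at vertices, details the paper leaves implicit. One small slip: the \emph{squared} distance to $y$ restricted to a line is always a strictly convex quadratic (it is the distance itself that degenerates to a piecewise-linear function when $y$ lies on the supporting line of the edge), but this only strengthens the strict-monotonicity step you need.
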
		
         
         \begin{proof}
       Let $z$ be an arbitrary point on $\bdP(x,y)$. We observed that activating an actuator at $y$ will move $z$ along the boundary. We need to establish in which direction the particle will move. Since $\bdP(x,y)$ is self-approaching from $x$ to $y$, we have that $|yz| \leq |yx|$. Therefore, the particle $z$ will move to $x$ since particles move in a direction to increase their distance from an actuator.
         \end{proof}
 
Next, we show that if the repulsion kernel is not empty, then there is at least one point on the boundary that is in the repulsion kernel. 

	\begin{lemma}\label{lem:boundaryWoof}
		Let $P$ be a convex polygon that is $1$-gatherable from a point $w$ in the interior of $P$, and let $a$ be
		the accumulation point for $w$.  Let $R$ be the ray from $a$ through $w$, not
		including the point $a$.  Then $P$ is $1$-gatherable from the point $w' = R \cap
		\bdP$, with $a$ as its accumulation point (see Figure
		\ref{fig:BoundaryWoof}).
	\end{lemma}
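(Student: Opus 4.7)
The plan is to establish two things: that $a$ is an accumulation point for the actuator at $w'$, and that it is the unique such point.

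For the first, I invoke Lemma~\ref{lem:perpendicularSupport}: the line $L$ through $a$ perpendicular to $wa$ supports $P$ at $a$. Since $a$, $w$, and $w'$ all lie on the line containing $R$, the same line $L$ is simultaneously perpendicular to $w'a$ at $a$, and $w'$ lies in the same closed half-plane bounded by $L$ as $P$. A local analysis --- comparing $P$ near $a$ against the circle centered at $w'$ of radius $|w'a|$, which is tangent to $L$ at $a$ on the polygon side --- then shows that $a$ is a local maximum of $|w'p|$ over $p \in P$ in a neighborhood of $a$, and hence an accumulation point for the actuator at $w'$. This is essentially the reasoning of Lemma~\ref{lem:perpendicularSupport} applied in reverse.

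For uniqueness, Theorem~\ref{thm:oneSlab} reduces the task to showing that $w'$ lies in exactly one slab. I would track slab membership as the actuator moves along the segment from $w$ to $w'$. Slab membership can change only when this segment crosses a slab boundary, which is a perpendicular chord emanating from an edge of $P$ at one of its endpoints. Each such crossing flips membership in that particular slab, and by Lemma~\ref{lem:alternating} changes the accumulation count by $\pm 1$. Since $a$ remains an accumulation point at every interior position on the ray (by applying the first-part argument at every such position), the number of slabs containing the actuator is always at least one throughout the motion.

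The main obstacle is ruling out that the slab count exceeds one at $w'$. The key geometric fact is that $R$ is parallel to the inward normal of $L$ at $a$, which severely restricts which slab boundaries $R$ can cross before exiting $P$. I would perform a case analysis based on where $w'$ lies on $\partial P$: if $w'$ lies on the split edge $e$ itself, then the open segment from $w$ to $w'$ remains inside $S(e)$ and enters no other slab; if $w'$ lies on a different edge $f$ of $P$, then I would argue that along the ray $S(e)$ is exited and $S(f)$ is entered exactly at $w'$, with these exchanges cancelling and no other net change occurring in the interior. In either case $w'$ lies in exactly one slab, which yields $a$ as its unique accumulation point, completing the proof.
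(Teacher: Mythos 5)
Your first step is correct and coincides with the paper's: since $a$, $w$, $w'$ are collinear, the supporting line $L$ of Lemma~\ref{lem:perpendicularSupport} is also perpendicular to $w'a$, so $a$ is a local maximum of distance from $w'$ and hence an accumulation point for $w'$. Your reduction of uniqueness to ``$w'$ lies in exactly one slab'' via Theorem~\ref{thm:oneSlab} is also legitimate, and the lower bound (the slab count stays at least one along $R$ because $a$ remains an accumulation point at every intermediate position) is sound.

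The gap is exactly where you write that you ``would argue'' that $S(e)$ is exited and $S(f)$ is entered at $w'$ with no other net change. The specific claims are not true as stated: the segment $ww'$ leaves $S(e)$ where it crosses the perpendicular to $e$ erected at an endpoint of $e$, and enters $S(f)$ where it crosses a perpendicular erected at an endpoint of $f$; both crossings generically occur strictly between $w$ and $w'$, not at $w'$. More importantly, nothing in your proposal excludes the possibility that the segment enters the slab of some intermediate edge --- one of the edges of the boundary chain running from the split point $p$ of $w$ around to $w'$ --- and is still inside that slab upon reaching $w'$; the observation that $R$ is parallel to the inward normal of $L$ at $a$ places no constraint on crossings of perpendiculars erected on edges far from $a$. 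This is precisely the hard case, and it is where the paper spends most of its effort: it shows that the chain $\bdP(p,w')$ lies inside the circle with diameter $aw'$ (via an intermediate circle centered at $w$ through $w'$), so every point $q$ of that chain satisfies $\angle{w'qa} > \pi/2$; the cone criterion of Icking et al.\ then makes the chain self-approaching toward $w'$, and Lemma~\ref{lem:selfapproach} shows those particles are driven clockwise, i.e., $w'$ lies in the clockwise region, not the slab, of each of those edges. Some quantitative geometric input of this kind is indispensable; the parity and continuity bookkeeping alone cannot deliver the upper bound of one slab at $w'$.
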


	\begin{figure}
 		\centering 
 		\includegraphics[scale=0.6]{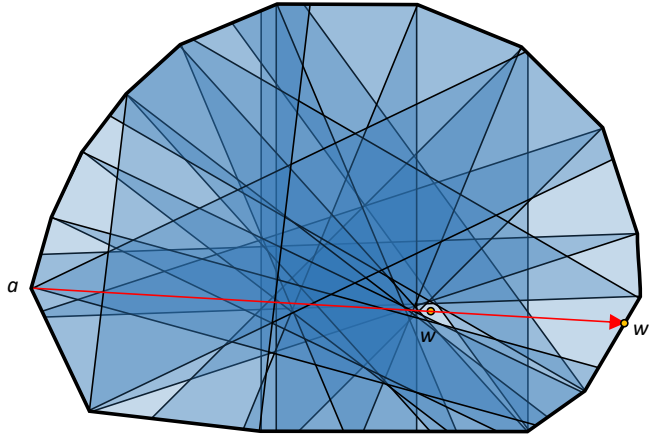}
		\caption{$P$ is 1-gatherable from $w'$. Each slab in $P$ is shown, with areas
		darkness corresponding to the number of slabs overlapping there.}
		\label{fig:BoundaryWoof} 
	\end{figure}

	\begin{proof}
		By Theorem \ref{thm:oneSlab}, the point of gatherability $w$ is in one edge
		$e$'s perpendicular slab.
		Without loss of generality, we assume that $e$ is horizontal at or below $w$
		(by rotation), that $a$ is not to the right of $w$ (by reflection), and that
		$e$ is $e_0 = v_0v_1$(by labelling).  Let $m$ be such that $a = v_m$.
		See Figure \ref{fig:BrainLabels}.
		Let $p$ be the point on $e$ which has a perpendicular through $w$. Note that $p$ is a split point for $w$.

		\begin{figure}
 			\centering 
			\includegraphics[scale=0.6]{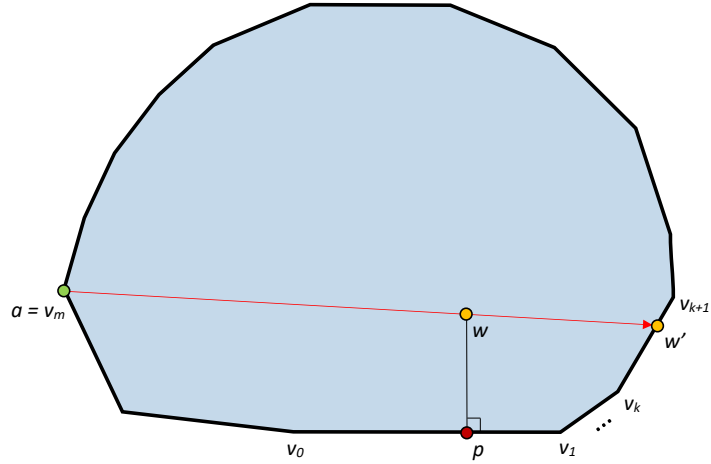}
			\caption{Some relevant points on the polygon.}
			\label{fig:BrainLabels} 
		\end{figure}

		To show that $P$ is 1-gatherable from the point $w'$ on $\bdP$, by Lemma \ref{lem:justvertices}, it suffices to show that the particles located on the vertices of $P$ move to one accumulation point with the activation of an actuator at $w'$. We will show that this accumulation point is $a$. We assume without loss of generality that $w'$ is located on the edge $e_k = [v_kv_{k+1})$. Recall that if $w'$ happens to be on $v_k$, then the placement of the actuator on $w'$ means the particle located at $w'$ is removed from consideration.
		
		We begin with the claim that an accumulation point for $w'$ is $a$.
		If this were not the case, then there would be a way to increase the
		distance from $w'$ on the boundary in the neighborhood of $a$. By Lemma \ref{lem:perpendicularSupport}, there is a line $L$ perpendicular to $wa$ that is a line of support of $P$ at $a$. By construction, $L$ is perpendicular to $w'a$. Therefore, $a$ is a local maximum with respect to $w'$, and thus is an accumulation point for $w'$.
		 We will now show that particles located at all other vertices move to $a$ when an actuator is activated at $w'$.

		Since $p$ is a split point for $w$, we have that upon activation of $w$, the particles on the vertices on $\bdP(a,p)$ move
		clockwise along the boundary to $a$. Similarly, the particles on the vertices on $\bdP(p,a)$ move counterclockwise along the boundary to $a$. By Theorem \ref{thm:oneSlab}, this means that $w$ is in all of the regions
		$\ERCCW{e_1}$, $\ERCCW{e_1}, \ldots, \ERCCW{e_{m-1}}$ and $w$ is in
		$\ERCW{e_m}$, $\ERCW{e_{m+1}}, \ldots, \ERCW{e_{n-1}}$. See Figure \ref{fig:wOnRight}.

		\begin{figure}
		    \centering 
		    \includegraphics[scale=0.5]{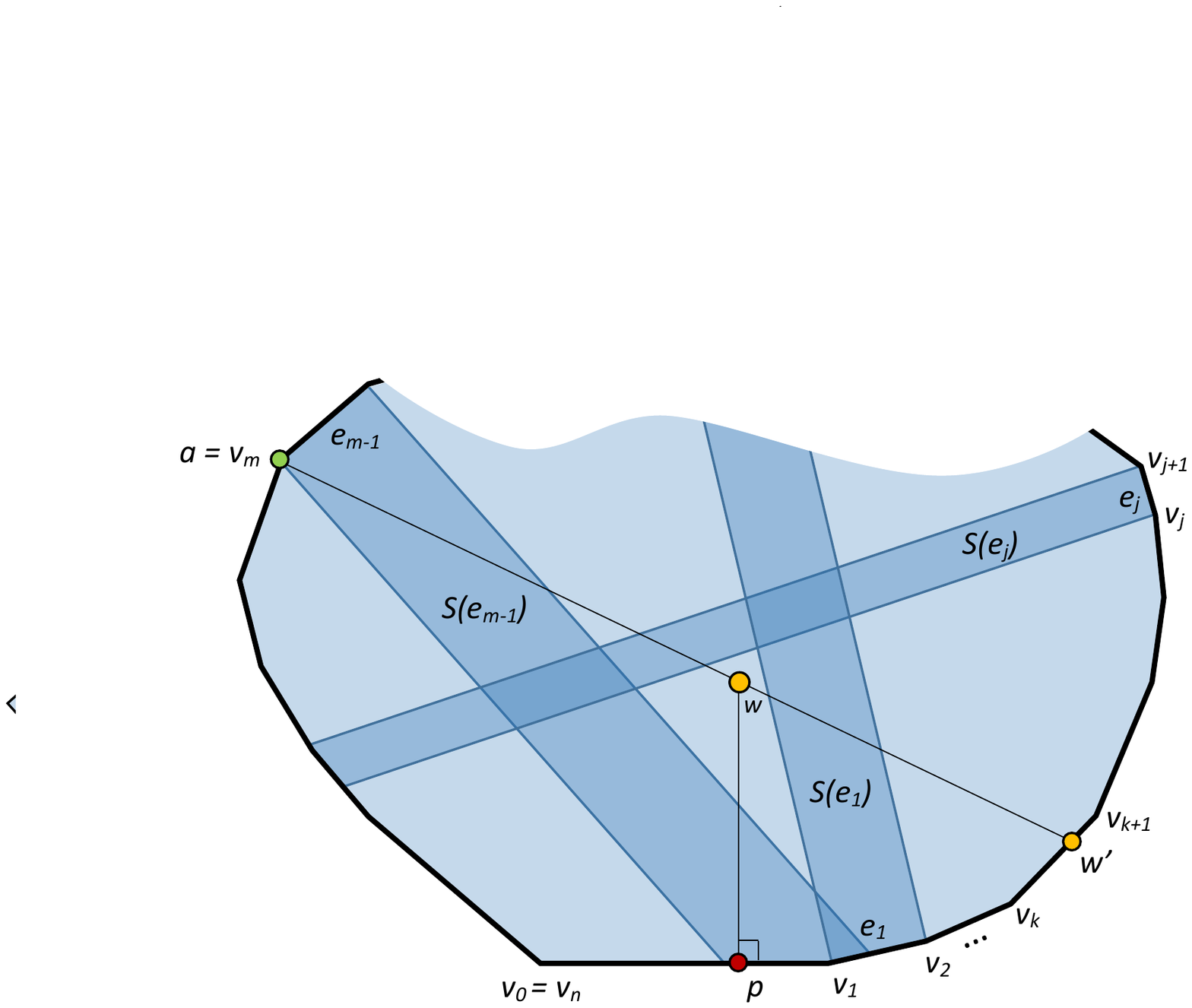}
			\caption{$w$ is in the regions $\ERCCW{e_0}$ to $\ERCCW{e_{m-1}}$.}
			\label{fig:wOnLeft} 
		\end{figure}

		\begin{figure}
			\centering 
			\includegraphics[scale=0.5]{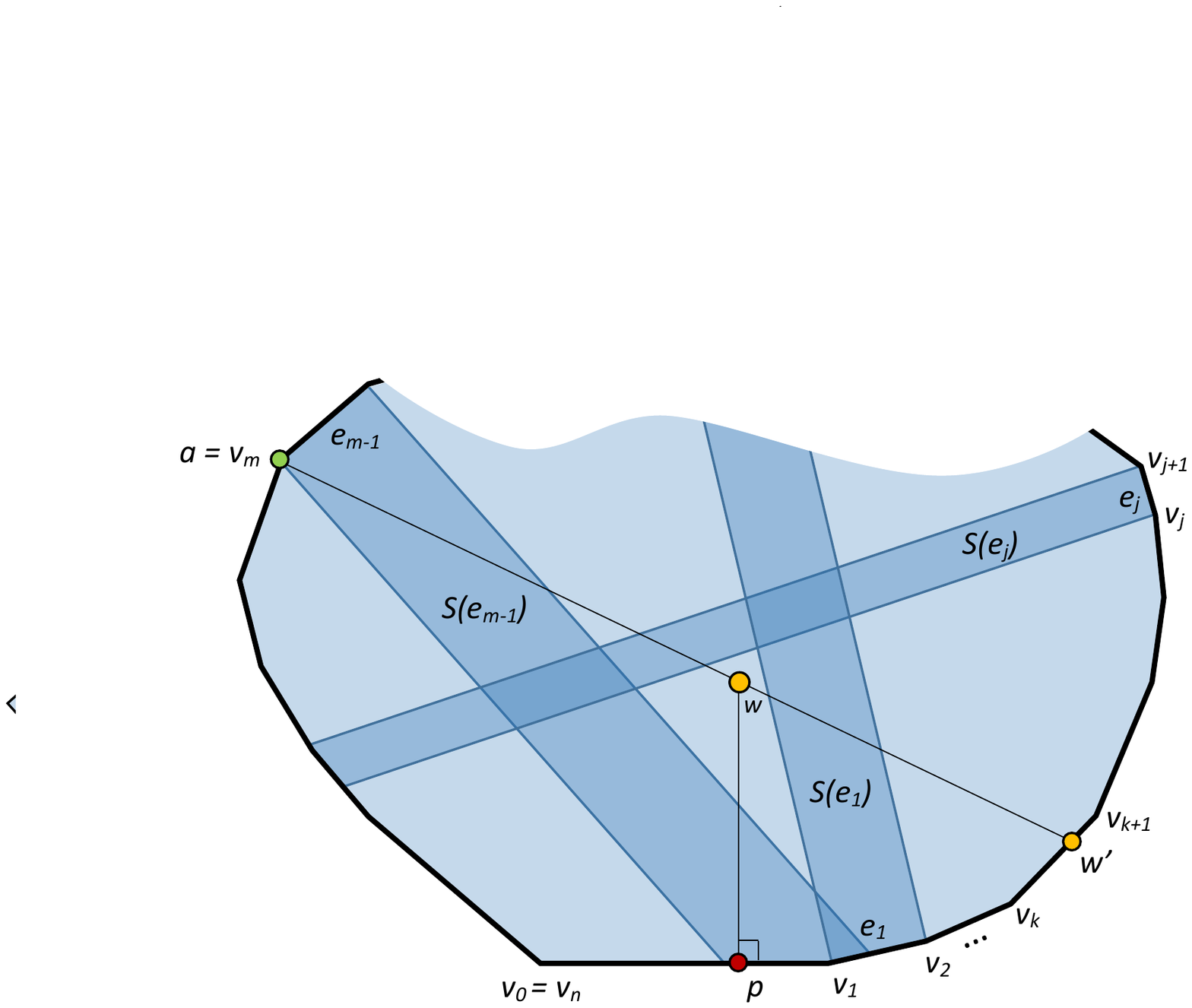}
			\caption{$w$ is in the regions $\ERCW{e_m}$ to $\ERCW{e_{n}}$.}
			\label{fig:wOnRight} 
		\end{figure}
		
		Since all of the slabs $\ERSPLIT{e_m}, \ERSPLIT{e_{m+1}}, \ldots,
		\ERSPLIT{e_{n-1}}$ cross the chord $aw'$ between $a$ and $w$, we have that $w'$ is also in $\ERCW{e_m}$, $\ERCW{e_{m+1}}, \ldots, \ERCW{e_{n-1}}$. Thus, the vertices $v_{m+1}, \ldots, v_n$ move in a clockwise direction to $a$.

		Now, we must show that the particles on vertices $v_1, \ldots, v_{m-1}$ also move to $a$. We first consider the vertices $v_{k+1}, \ldots, v_{m-1}$. Again, since these vertices move counterclockwise when the actuator is activated at $w$, the slabs $\ERSPLIT{e_j}$ for $k \leq j \leq m-1$ cross
		the chord $aw'$ between $a$ and $w$. Therefore, none of them can contain $w'$. This implies that $w'$ is in $\ERCCW{e_{k+1}}, \ldots, \ERCCW{e_{m-1}}$.
		
		We now show that the vertices $v_1, \ldots, v_k$ move in a clockwise direction to $a$.
		Consider the circle $C$ centered at $w$ and
		going through $w'$.
		This circle contains $\bdP[v_1,v_k]$ since particles on $v_1$, $v_2, \ldots, v_k$ move in a counterclockwise direction to $a$ when an actuator is activated at $w$.
		It is strict containment as the particles always move away from $w$.
		
		\begin{figure}
			\centering 
			\includegraphics[scale=0.6]{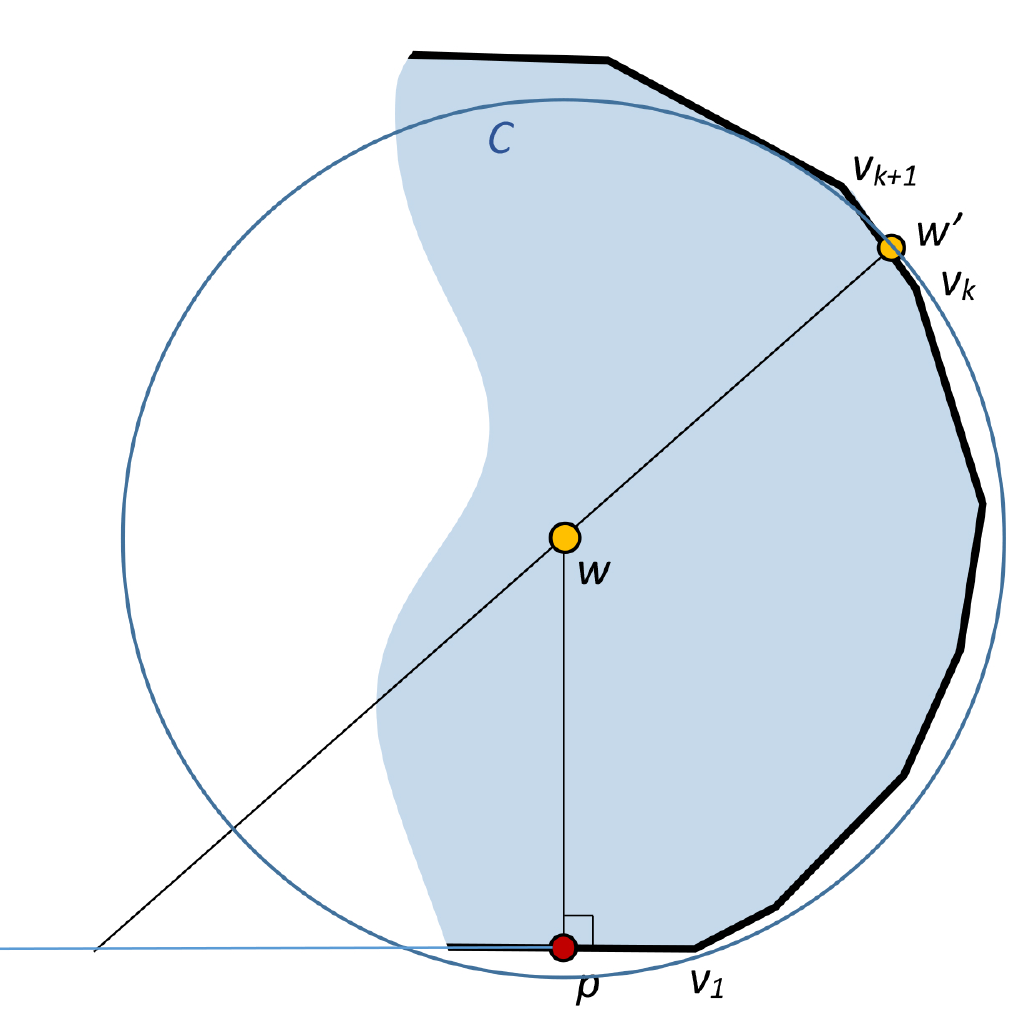}
			\caption{The circle $C$ contains the boundary from $p$ to $w'$.}
			\label{fig:CircleC} 
		\end{figure}
		
		Now consider the circle $C'$ that has the chord $aw'$ as diameter.
		Since $a$ is the accumulation point for $w$, it is the farthest point from $w$.
		This implies that the
		the center $c$ of $C'$ lies on the segment $aw$, with radius
		 $|cw'|$.  $C'$ contains $C$ since $|cw'| > |ww'|$. (Figure \ref{fig:CircleCPrime}).

		\begin{figure}
 			\centering 
 			\includegraphics[scale=0.6]{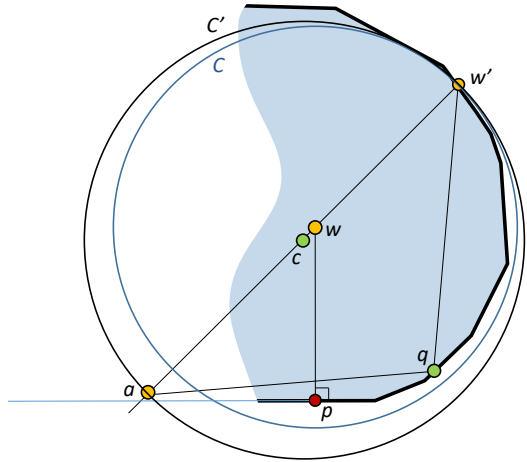}
			\caption{The circle $C'$ contains the circle $C$ and thus contains the
				boundary from $p$ to $w'$.}
			\label{fig:CircleCPrime} 
		\end{figure}
		
		Let $q$ be an arbitrary point in $\bdP(p,w')$. Since $q$ is in the interior of $C$, 
		we have that $\angle{w'qa} > \pi/2$. By convexity, we have that $\angle{w'qp} > \angle{w'qa}$. Consider the cone formed by the ray from $q$ to $w'$ and the ray at $q$ that is an extension of the line through $a$ and $q$. Since $\angle{w'qp} > \pi/2$, we have that the angle formed at this ray is strictly less than $\pi/2$ and $\bdP[q,w')$ is contained in the cone. Lemma 3 in \cite{icking} states that when $\bdP[q,w')$ is contained in a cone at $q$ with angle at most $\pi/2$ for every $q\in \bdP(p,w')$ then $\bdP(p,w')$ is self-approaching from $p$ to $w'$. 
By Lemma \ref{lem:selfapproach}, we have that an activation of an actuator at $w'$ sends $q$ clockwise around the boundary to $p$ since $|pw'| \geq |qw'|$. Therefore, the vertices $v_1, \ldots, v_k$ move in a clockwise direction to $a$.
	
		We have now shown the polygon is 1-gatherable from $w'$.
	
	\end{proof}
		
	As a consequence of the previous lemma, in order to tell if a polygon is
	$1$-gatherable, it suffices to determine if it is $1$-gatherable from the boundary.
	To do this in linear time, we employ an approach that resembles the rotating calipers algorithm to compute the diameter of a convex polygon \cite{shamos}. In essence, for every point $x$ on $\bdP$, we want to compute the first clockwise and first counterclockwise accumulation point. We do this in two steps. We compute all the counterclockwise accumulation points then compute the clockwise accumulation points. The algorithm to compute the counterclockwise accumulation points proceeds as follows. We start at the lowest point $x$ of $P$ and place the first horizontal caliper at $x$. We then walk around the boundary in counterclockwise direction until we find the counterclockwise accumulation point $y$ for $x$. We place the second caliper at $y$ such that it is perpendicular to $xy$. As $x$ moves counterclockwise around $P$, there are two types of events. Either $x$ moves to a new vertex or the caliper at $y$ becomes coincident to an edge of $P$ in which case $y$ moves from one vertex to the next. There are a linear number of events that occur and by recording these events, when the calipers returns to its starting positions, we know the counterclockwise accumulation point for every point on the boundary of $P$. By repeating this in the clockwise direction, we find the clockwise accumulation points. For any point on the boundary of $P$, if its clockwise accumulation point is the same as its counterclockwise accumulation point, then the polygon is 1-gatherable from that point.
We conclude this section with the following:
	\begin{theorem}
	We can determine if a convex $n$-vertex polygon is 1-gatherable in $O(n)$ time.
	\end{theorem}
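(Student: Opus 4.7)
The plan is to reduce to the boundary using Lemma \ref{lem:boundaryWoof} and then sweep a pair of ``calipers'' around $\partial P$, just as described in the paragraph preceding the theorem statement. For any candidate position $w'\in \partial P$, Lemma \ref{lem:perpendicularSupport} tells us that an accumulation point $a$ of $w'$ must be a vertex $v$ such that the line through $v$ perpendicular to $w'v$ supports $P$. Because $P$ is convex, for each $w'$ there are exactly two such vertices (counting multiplicities in degenerate cases): the first accumulation point encountered going clockwise from $w'$ along $\partial P$, and the first one encountered going counterclockwise. The polygon is $1$-gatherable from $w'$ if and only if these two accumulation points coincide, since by Lemma \ref{lem:alternating} the number of accumulation points equals the number of split points, and this equals $1$ exactly when CCW and CW accumulation points agree. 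So it suffices to compute, for every $w'\in \partial P$, its two accumulation points and test equality.

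For the CCW sweep, I would maintain two pointers: the base point $x$ walking CCW along $\partial P$ starting from the lowest vertex with an initial horizontal supporting line, and $y$, the current CCW accumulation point for $x$, with a second supporting line through $y$ perpendicular to $xy$. The key monotonicity claim is: as $x$ advances CCW along $\partial P$, the CCW accumulation point $y$ also advances monotonically CCW and never moves backwards. This is what makes the rotating-calipers schema work. To prove it, I would fix $x$ and note that as $x$ moves infinitesimally CCW to $x'$, the chord $xy$ rotates, but since $P$ is convex and $y$ already supports $P$ with a perpendicular, the new candidate $y'$ for $x'$ lies on $\partial P$ at or CCW of $y$; a short case analysis based on the relative slopes of the supporting line at $y$ and the edge containing $y$ establishes this.

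The sweep itself runs as follows: at each step, compare the ``event'' of $x$ reaching the next vertex with the ``event'' of the second caliper becoming parallel to the next CCW edge at $y$, and advance whichever occurs first; record the current $(x,y)$ association on the piece of $\partial P$ being swept. Because both pointers only advance, and each traverses $\partial P$ once, the total number of events is $O(n)$ and each event is processed in constant time. Run the symmetric sweep in the CW direction to obtain, for every $x\in \partial P$, its CW accumulation point $y^{-}(x)$ and its CCW accumulation point $y^{+}(x)$, stored as piecewise-constant functions defined on $O(n)$ arcs. Walking through the $O(n)$ common refinement of these two subdivisions, we check whether $y^{-}(x)=y^{+}(x)$ on any arc; the polygon is $1$-gatherable iff such an arc exists, in which case any interior point of the arc is a valid $w'$.

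The main obstacle is the monotonicity proof and the handling of degeneracies: a perpendicular from $x$ may coincide with an entire edge at $y$ (in which case both endpoints of that edge are candidate accumulation points and one must choose the CCW one carefully), or $x$ itself may be a vertex at which a perpendicular to the adjacent edge points into $P$ and must be interpreted consistently with the model in \S\ref{sec:model}. Once monotonicity and these degenerate cases are nailed down, the algorithm takes linear time, and combined with Lemma \ref{lem:boundaryWoof} and Theorem \ref{thm:oneSlab} this yields the claimed $O(n)$-time decision procedure.
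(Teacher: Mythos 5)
Your proposal is correct and follows essentially the same route as the paper: reduce to boundary actuator locations via Lemma~\ref{lem:boundaryWoof}, then run a rotating-calipers-style sweep in each direction to compute the first clockwise and first counterclockwise accumulation point for every boundary point, declaring the polygon $1$-gatherable exactly where the two coincide. Your added attention to the monotonicity of the second caliper and to degenerate perpendiculars is a welcome elaboration of details the paper leaves implicit, but it does not change the approach.
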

	\begin{proof}
	Follows from Lemma \ref{lem:boundaryWoof} and the discussion above.
	\end{proof}

\section{2-Gatherability}
   In this section we prove that a convex polygon with at most two acute vertices is 2-gatherable. We then give an $O(n)$ algorithm to determine the location of the two actuators and the sequence of activation.
	
				\begin{figure}
			\centering 
		    \includegraphics[scale=1]{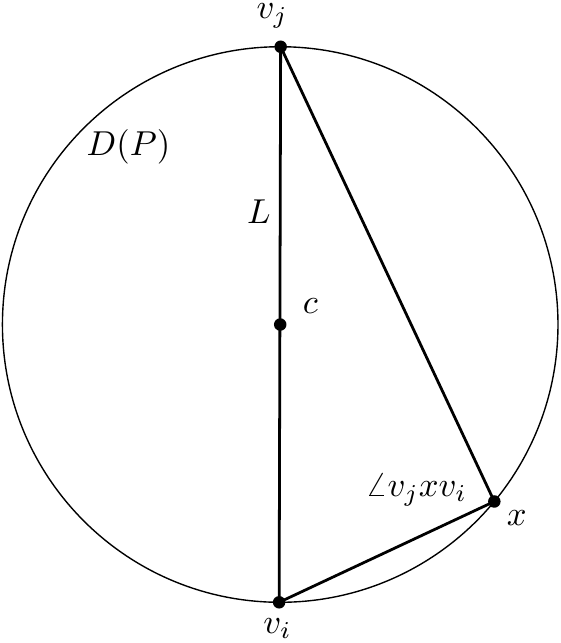}
			\caption{  
				Figure for Lemma \ref{thm:twoWoof}
				}
			\label{fig:2woofig} 
		\end{figure}

		\begin{theorem}\label{thm:twoWoof}
			If a convex polygon has two or fewer acute vertices, then it is 2-gatherable.
		\end{theorem}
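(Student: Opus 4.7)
The plan is constructive: I would explicitly place two actuator positions $w_1, w_2$ and verify that their sequential activation gathers all particles to a single point. I focus on the hardest case, of exactly two acute vertices $u, v$; the cases of zero or one acute vertex are handled by the same strategy with greater flexibility in choosing the target and in choosing~$w_1$.

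By Lemma~\ref{obs:acute}, the particles at $u$ and $v$ cannot move under any activation; they can only be removed by being the first-actuator location. Since gathering must be to a single point, the particle at one of them must be removed this way, and the gathering target must be the other acute vertex. Say we target~$u$ and set $w_1 = v$. Activating at $w_1$ produces a set of accumulation points $A = \{u = a_1, \ldots, a_k\}$ on $\bdP$, alternating with $k$ split points (Lemma~\ref{lem:alternating}, counting $v$ itself as a split point). By Lemma~\ref{lem:perpendicularSupport}, each $a_i$ admits a support line of $P$ perpendicular to $va_i$; the acuteness of $u$ combined with convexity guarantees $u \in A$. Geometrically, the $a_i$'s all lie on the ``far arc'' of $\bdP$ from $v$, namely the arc cut off by the perpendiculars from $v$ to the two edges incident to $v$. (If $k = 1$ the polygon is already $1$-gatherable from~$v$ and we are done; assume $k \geq 2$.)

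To finish, I would choose $w_2$ on an edge adjacent to $v$, close to $v$ but not at $v$ (which is free of particles since $w_1 = v$ removed the one there and the activation moved everything onto the far arc). Two properties are needed: (i)~$u$ is an accumulation point of the flow from $w_2$, arrangeable via Lemma~\ref{lem:perpendicularSupport} by placing $w_2$ so that $w_2 u$ is perpendicular to a support line of $P$ at $u$; and (ii)~every particle in $A$ flows along $\bdP$ to $u$ under the second activation. The intuition is that as $w_2$ slides off $v$ onto an adjacent edge, the flow diagram changes continuously and adjacent accumulation/split-point pairs can annihilate; with the perturbation in the right direction, every non-$u$ accumulation point of the first activation is absorbed into the basin of~$u$.

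The main obstacle is verifying (ii): showing that no accumulation point of the flow from $w_2$ sits between $u$ and some $a_i$ on the arc of $\bdP$ containing $A$. I would handle this by combining Lemma~\ref{lem:perpendicularSupport} (to locate the accumulation points of $w_2$) with Lemma~\ref{lem:selfapproach}, certifying that each sub-arc $\bdP[a_i, u]$ is self-approaching toward $u$ relative to $w_2$, which forces the particle at $a_i$ to flow to $u$. The careful choice of which edge adjacent to $v$ to place $w_2$ on, and at what distance from $v$, is where the geometric reasoning concentrates, and likely requires case analysis based on the orientations of the support lines at $u$ and at the vertices adjacent to $v$.
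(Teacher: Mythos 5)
Your reduction of the hard case is sound as far as it goes: with exactly two acute vertices $u,v$, Lemma~\ref{obs:acute} forces the first actuator to sit at one of them and forces the gathering target to be the other. The fatal gap is in the second step, where you place $w_2$ ``close to $v$'' and appeal to accumulation/split pairs annihilating as $w_2$ slides off $v$. A vertex $a$ is an accumulation point for an actuator at $x$ exactly when both edge directions $d$ at $a$ satisfy $\langle d, x-a\rangle \le 0$; when these inequalities are \emph{strict} for $x=v$, they remain strict for every $x$ in a neighbourhood of $v$, so such an accumulation point of $v$ is an accumulation point of every nearby $w_2$ and the particle parked there by the first activation never moves again. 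No infinitesimal perturbation can annihilate a strict accumulation point; annihilation only occurs when the actuator actually crosses a slab boundary, i.e.\ at non-strict (degenerate) accumulation points. And polygons of the kind you must handle really do have extra strict accumulation points: take the convex pentagon with vertices $v=(0,0)$, $u=(0.626,0)$, $b=(0.600,0.299)$, $a=(0.450,0.559)$, $c=(0.074,0.422)$ in counterclockwise order (edge directions $0^\circ, 95^\circ, 120^\circ, 200^\circ, 260^\circ$; interior angles $80^\circ, 85^\circ, 155^\circ, 100^\circ, 120^\circ$, so exactly $u$ and $v$ are acute). Activating at $v$ leaves particles at $u$, $b$ and $a$, and all three are strict local maxima of distance from $v$ (each defining inner product has positive slack), so every $w_2$ sufficiently near $v$ keeps particles stranded at $b$ and $a$. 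The symmetric choice fails too: activating first at $u$ strands particles at $v$, $a$ and $c$, all strict for $u$. So neither ``activate at an acute vertex, then activate just next to it'' run succeeds, even though the polygon is 2-gatherable.

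What actually works — and this is where the paper's proof diverges from yours — is to place the second actuator near a vertex \emph{far from the first one}. The paper takes the smallest enclosing disk of $P$: if its basis is two vertices forming a diameter, the polygon is already 1-gatherable from one of them; if it is three vertices $v_i,v_j,v_k$ with the centre interior to their triangle, at most two of them are acute, so some $v_j$ is non-acute, and the three chains between consecutive basis vertices are self-approaching (each boundary point sees the opposite basis chord at an angle exceeding $\pi/2$). One activation at $v_i$ dumps everything onto $\bdP[v_j,v_k]$, and a second activation at a particle-free point just clockwise of $v_j$ sweeps that whole chain to its far end, using Lemma~\ref{lem:selfapproach}. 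In the pentagon above this prescribes activating at $u$ and then just off $a$, which indeed gathers everything at $v$. If you want to rescue your scheme you must abandon the perturbation-of-$w_1$ idea and instead prove the existence of a second actuator location whose entire ``basin'' over the occupied arc is the single vertex $u$; the self-approaching-chain machinery via the enclosing disk is precisely the tool that certifies such a location, and nothing in your write-up currently substitutes for it.
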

		
		\begin{proof}
		
			Let \sed{P} be the smallest disk enclosing polygon $P$ with centre $c$. Either there are two vertices $v_i$ and $v_j$ of $P$ that form a diameter of \sed{P} or there are three vertices $v_i$, $v_j$, and $v_k$ on \bd{P} such that $c$ is in the interior of the triangle formed by the three vertices~\cite{sec1, sec2}. We consider each case separately. Recall that by Lemma \ref{lem:justboundary}, we can assume that the particles are only located on the boundary of $P$.
			
			{\bf Case 1:} Two vertices $v_i$ and $v_j$ of $P$ form a diameter of \sed{P}. In this case, we show that an actuator activated at vertex $v_i$ results in all particles accumulating at $v_j$. Assume, without loss of generality, that $v_i$ and $v_j$ lie on a vertical line $L$ with $v_i$ below $v_j$. The two vertices partition the polygon boundary into two chains, $\bdP[v_i,v_j]$ which is to the right of $L$ and $\bdP[v_j,v_i]$ which is to the left. We also assume that each chain consists of at least two edges, since otherwise, one of the chains is the edge $v_iv_j$ and trivially any particle on this edge moves to $v_j$ when an actuator at $v_i$ is activated. To complete the proof in this case, by Lemma \ref{lem:selfapproach}, it suffices to show that both $\bdP[v_i,v_j]$ and $\bdP[v_j,v_i]$ are self-approaching curves from $v_j$ to $v_i$. 
			
			Consider any point $x\in \bdP(v_i,v_j)$. Since $x$ is in \sed{P} strictly to the right of $L$ we have that $\pi > \angle{v_jxv_i} \geq \pi/2$. Consider the cone formed by the intersection of the half-space bounded by the line through $v_j$ and $x$ that contains $v_i$ and the half-space bounded by the line through $v_i$ and $x$ that does not contain $v_j$. This cone has angle at most $\pi/2$ and contains $\bdP[v_i,x]$. Since $x$ is an arbitrary point on $\bdP(v_i,v_j)$, by Lemma 3 in \cite{icking}, we have that $\bdP[v_i,v_j]$  is self-approaching from $v_j$ to $v_i$. A similar argument shows that $\bdP[v_j,v_i]$  is also self-approaching from $v_j$ to $v_i$.
			
			{\bf Case 2:} There are three vertices $v_i$, $v_j$, and $v_k$ appearing in counter-clockwise order on \bd{P} such that $c$ is in the interior of the triangle formed by the three vertices. Since there are at most two acute vertices, without loss of generality, assume that $v_j$ is a polygon vertex with interior angle at least $\pi/2$. Reorient the polygon such that $v_i$ is the lowest point. The polygonal chains $\bdP[v_i,v_j]$, $\bdP[v_j,v_k]$ and $\bdP[v_k,v_i]$ are self-approaching from $v_j$ to $v_i$, $v_k$ to $v_j$ and $v_k$ to $v_i$, respectively, by the same argument as the one used in Case 1. In fact, since $c$ is strictly in the interior of the triangle formed by the three vertices, we have that the cones used to prove that the chains are self-approaching have an angle that is strictly less than $\pi/2$. 
			
			By placing a first active actuator on $v_i$, we have that all the particles on $\bdP(v_i,v_j]$ and all the particles on $\bdP[v_k,v_i)$ move  onto $\bdP[v_j,v_k]$. Since $\bdP[v_j,v_k]$ is self-approaching from $v_k$ to $v_j$, if we activated a second actuator at $v_j$ then all the particles on this chain move to $v_j$'s accumulation point which would complete the proof. However, even though $v_j$ is not acute, it may be the case that $v_j$ is the counterclockwise accumulation point for $v_i$. This would prevent us from placing an actuator on $v_j$ since after the activation of the first actuator on $v_i$, particles have accumulated on $v_j$. Recall that all subsequent placements of actuators must be on points in $P$ that are free of particles. Since for every point $x$ on $\bdP(v_j,v_k)$, $\angle{v_jxv_k} > \pi/2$, there must exist a point $y$ on the edge $v_jv_{j-1}$ infinitessimally close to $v_j$ such that the $\angle{yzv_k}$ is still strictly greather than $\pi/2$ for every $z\in \bdP[v_j,v_k)$. This implies that $\bdP[y,v_k]$ is self-approaching from $v_k$ to $y$. Thus, by Lemma \ref{lem:selfapproach},  activating a second actuator at $y$, which is free of particles after the first activation, moves all the particles that have accumulated on $\bdP[v_j,v_k]$ to the counterclockwise accumulation point of $y$.

		\end{proof}

\bibliography{ref}

\end{document}